\documentclass[12pt]{article}

\usepackage{algorithm}
\usepackage{algpseudocode}
\usepackage{amsfonts}
\usepackage{amsmath}
\usepackage{amssymb}
\usepackage{amsthm}
\usepackage{booktabs}
\usepackage[footnotesize,bf]{caption}
\usepackage{enumitem}
\usepackage{float}
\usepackage{graphicx}
\usepackage[letterpaper, left=1.5in, right=1.5in, top=1.5in, bottom=1.5in]{geometry}
\usepackage[colorlinks, citecolor=blue, linkcolor=blue, urlcolor=blue]{hyperref}
\usepackage{multirow}
\usepackage{natbib}
\usepackage{pdfpages}
\usepackage{threeparttable}
\usepackage[rm,bf]{titlesec}
\usepackage{setspace}
\usepackage{url}

\newtheorem{theorem}{Theorem}

\newtheorem{hypothesis}[theorem]{Hypothesis}

\newtheorem{proposition}{Proposition}

\newcommand{\DI}{$2\times 2-$I}
\newcommand{\DE}{$2\times 2-$E}
\newcommand{\II}{$3\times3-$I}
\newcommand{\IE}{$3\times3-$E}

\begin{document}
\bibliographystyle{elsart-harv}
\title{Lies, Labels, and Mechanisms\footnote{Thanks to Samson Alva, David Cooper and seminar participants at the 2021 Global ESA meetings and TETC24 for useful comments. We thank Valon Vitaku for conducting experiments. All errors are our own.}}
\date{\today}

\author{Alexander L. Brown,\thanks{Department of
Economics, Texas A\&M University, College Station, TX 77843;
 \href{mailto:alexbrown@tamu.edu}{alexbrown@tamu.edu}; \href{http://people.tamu.edu/\%7Ealexbrown}{http://people.tamu.edu/$\sim$alexbrown}} \ \
 Ethan Park,\thanks{Berkeley Research Group, LLC, Houston, TX 77002;
 \href{mailto:ethanpark1@gmail.com}{ethanpark1@gmail.com}}\ \  and
 Rodrigo A. Velez\thanks{Department of
Economics, The University of Texas at San Antonio, San Antonio, TX 78249;
\href{mailto:rodrigo.velez@utsa.edu}{rodrigo.velez@utsa.edu}; \href{https://sites.google.com/site/rodrigoavelezswebpage/home}{https://sites.google.com/site/rodrigoavelezswebpage/home}}
 }
\maketitle

\begin{abstract}
\begin{singlespace}
We test whether lying aversion can steer equilibrium selection in mechanism design. In a principal–worker environment, the direct mechanism admits two dominant‑strategy equilibria: the designer’s target and a worker‑optimal outcome. We show this limitation persists for all robust mechanisms, then ask whether framing misreports as explicit lies helps. We develop a $2\times2$ experiment that varies direct vs.\ extended mechanisms with implicit vs.\ explicit messages. We find that framing misreporting of type as an explicit lie shifts play away from the worker‑optimal outcome toward truthful reporting, raising designer payoffs with minimal efficiency loss. These findings indicate that lying aversion is an effective lever for aligning behavior with social objectives.
\end{singlespace}
\medskip
\begin{singlespace}

\medskip

\textit{JEL classification}:  C72, D63.

\textit{Keywords}: lying aversion, mechanism design, robust full implementation.
\end{singlespace}
\end{abstract}

\pagebreak
\section{Introduction}\label{Sec:Intro}

When a mechanism admits multiple equilibria, the designer’s intended outcome may fail to prevail. A theoretical literature shows that even a modest cost of dishonesty can resolve this problem \citep{MATSUSHIMA-2008-JET,DUTTA-SEN-2012-GEB,KARTIK-etal-2007-JET,KARTIK-etal-2014-GEB}. Empirically, however, it remains open whether lying aversion can be leveraged \emph{in practice} when agents’ interests diverge from the designer’s. We address this question in a principal–worker adverse-selection problem.

In our environment, the principal’s direct-revelation mechanism admits two dominant-strategy equilibria: (i) truthful play, which implements the principal’s social choice function, and (ii) a worker-optimal outcome sustained by non-cooperative coordination. We show that this limitation is generic: any mechanism that robustly implements—\`a la \citet{Bergemann-Morris-2005-Eca}—the principal’s objective also robustly implements the worker-optimal outcome (Proposition~\ref{Prop:Repullo}). Thus, robust implementation alone leaves room for undesirable equilibria.

\citet{Repullo-1985-RES} formulates the limitation in an abstract social-choice environment with no explicit economic interpretation. We map that structure into a natural principal–worker setting: tasks are \emph{delicate} or \emph{perfunctory}, and salaries are \emph{high} or \emph{low}. We reproduce the required payoff structure—experts value variety; beginners trade off delicate work against compensation. This translation turns a purely abstract impossibility into a testable, operational design problem and enables empirical evaluation.

We then examine whether lying aversion can mitigate the problem (see Sec.~\ref{Sec-behav-robust} for a review of the literature on lying aversion and its behavioral implications). Our experiment uses a $2\times2$ between-subjects design varying along two dimensions. First, we vary whether misreporting is framed as an \emph{explicit lie} by using type-labeled messages (explicit) versus neutral labels (implicit). Second, we vary whether the mechanism includes a \emph{leave the question blank} option that removes weak dominance from the deceptive action. The central hypotheses are: (i) the worker-optimal equilibrium is observed less often when the associated actions are framed as lies; and (ii) it is observed less often when those actions are no longer in dominant strategies.

The findings deliver a clear message. Framing the misreporting of type as an explicit lie shifts play away from the worker‑optimal equilibrium and toward the truthful one, with corresponding gains in designer payoffs. Relying only on dominance yields smaller effects. Taken together, the results show that lying aversion is a practical design lever. Simple message frames that make misreporting an explicit lie can materially improve mechanism performance even when agents have incentives to coordinate against the designer.

\section{Lying Aversion as a Design Lever}\label{Section-problem}

\subsection{Hiring talent with adverse selection}
\label{sec:RepulloGame}

A principal assigns tasks and compensation to a team of two workers. Each worker is either a beginner $(B)$ or an expert $(E)$. A worker’s expertise is private information, known only to that worker and not to the principal. For now, we leave unspecified what each worker knows about the other’s expertise. 

There are two tasks to be completed: a delicate task $(D)$ and a perfunctory task $(P)$. These tasks can be split between the two workers. Let $M$ denote the mixed task in which a worker performs half of $D$ and half of $P$. The principal can pay either a high salary $(H)$ or a low salary $(L)$. Each worker is assigned a task--salary pair; for example, $(H,D)$ denotes a high salary for performing the entire delicate task.

The principal’s objective is to pay a high salary to experts and a low salary to beginners, assign the delicate task to experts whenever at least one is available, and divide tasks among workers of equal skill. These objectives define a social choice function (SCF).

Workers’ preferences over assignments depend on their expertise. An expert’s payoff function, $\pi_i(\cdot \mid \textrm{E})$, satisfies
\begin{equation}\label{relexp}
\pi_i(\textrm{H,M}\mid \textrm{E}) 
> \pi_i(\textrm{H,D}\mid \textrm{E}) 
> \pi_i(\textrm{L,M}\mid \textrm{E}) 
> \pi_i(\textrm{L,P}\mid \textrm{E}).
\end{equation}
A beginner’s payoff function, $\pi_i(\cdot \mid \textrm{B})$, satisfies
\begin{equation}\label{relbeg}
\pi_i(\textrm{L,P}\mid \textrm{B}) = \pi_i(\textrm{H,M}\mid \textrm{B}) 
> \pi_i(\textrm{L,M}\mid \textrm{B}) = \pi_i(\textrm{H,D}\mid \textrm{B}).
\end{equation}
Ceteris paribus, a beginner prefers a higher salary. Performing the delicate task entails a cost. The cost of performing half of the delicate task rather than half of the perfunctory task is exactly offset by a high salary instead of a low salary.

\subsection{Standard robust implementation}

The principal seeks to design a contract that implements her objective and performs well under a range of informational assumptions. Following Harsanyi, we represent an informational assumption by a common prior over the type space $\Theta = \Theta_1 \times \Theta_2$, with $p \in \Delta(\Theta)$, where $\Delta(\Theta)$ denotes the set of probability measures on $\Theta$. Since all type and message spaces in our analysis are finite, probability distributions lie in a simplex. Anticipating strategic behavior by the workers, the principal requires the contract to be incentive compatible for all possible common priors.

It is well known that implementation of the principal’s desiderata based on simultaneous reports is possible only through mechanisms that admit ex post equilibria whose outcomes coincide with the principal’s objectives \citep{Bergemann-Morris-2005-Eca}. Moreover, since each worker’s payoff depends only on their own type, the associated social choice function (SCF) must be strategy-proof. That is, when used as the outcome function of a direct revelation mechanism, it must induce truthful reporting as a dominant strategy \citep{Bergemann-Morris-2005-Eca}.

In our principal--worker environment, the SCF satisfies this requirement. As shown in Table~\ref{Table:DM}, regardless of the other worker’s report, each worker is weakly better off reporting their true type. Furthermore, an expert is strictly better off reporting truthfully whenever the other worker reports being a beginner.

While our motivating example is simple, the limitations it reveals are not specific to this environment. To formalize the impossibility result illustrated by the example, we now embed it in a general private-type implementation framework.

Let the outcome space be $X$, the set of agents be $N=\{1,\ldots,n\}$, and the type space be $\Theta=\Theta_1 \times \cdots \times \Theta_n$. For each agent $i \in N$ and type $\theta_i \in \Theta_i$, preferences are represented by a payoff function $\pi_i(\cdot \mid \theta_i)$.

A mechanism is a pair $(M,\varphi)$, where $M = (M_i)_{i \in N}$ is the message space and $\varphi : M \rightarrow X$ is the outcome function. A strategy for agent $i$ is a function $\sigma_i : \Theta_i \rightarrow M_i$. An ex post equilibrium of $(M,\varphi)$ is a strategy profile $\sigma = (\sigma_i)_{i \in N}$ such that, for every $\theta \in \Theta$, every $i \in N$, and every $m_i \in M_i$,
\[
\pi_i\bigl(\varphi(\sigma(\theta)) \mid \theta_i\bigr)
\ge
\pi_i\bigl(\varphi(m_i,\sigma_{-i}(\theta_{-i})) \mid \theta_i\bigr).
\]

\begin{table}[t]
\centering
\begin{tabular}{c|cc}
 & \multicolumn{2}{c}{Worker~2 reports} \\
Worker~1 reports & Beginner & Expert \\ \hline
Beginner & $(L,M),(L,M)$ & $(L,P),(H,D)$ \\
Expert   & $(H,D),(L,P)$ & $(H,M),(H,M)$
\end{tabular}
\caption{Direct revelation mechanism. The matrix represents the contract given to each agent conditional on the reports. In each entry, the contract of worker~1 is shown on the left and that of worker~2 on the right. For example, $(L,M),(L,M)$ denotes the outcome in which both worker~1 and worker~2 receive the contract $(L,M)$.}
\label{Table:DM}
\end{table}

\subsection{A limit of standard robust implementation}

Ex post implementation ensures that the principal’s desired outcome arises in Nash equilibrium across a robust class of information models. However, mechanisms that achieve ex post implementation may also admit Nash equilibria that lead to outcomes different from those intended by the principal. Such undesirable equilibria can be not only robust, but also plausible. 

Our principal--worker environment illustrates this issue. The direct revelation contract in Table~\ref{Table:DM} admits a dominant-strategy equilibrium that yields outcomes not intended by the principal. In particular, it is a weakly dominant strategy for a beginner to report being an expert. As a result, the dominant-strategy equilibrium in which all workers report being experts Pareto dominates the truthful equilibrium from the workers’ perspective. If all workers are beginners, both are better off coordinating on reporting expertise. If there is only one expert, that worker also benefits from performing the mixed task rather than the entire delicate task. We refer to this equilibrium as \emph{worker-optimal}.

The limitation of the principal’s direct revelation mechanism cannot be overcome within the standard implementation framework. In fact, any mechanism that implements the principal’s objective in ex post equilibrium also implements the worker-optimal SCF in ex post equilibrium.

The following proposition formalizes this statement in a general private-type implementation environment. We credit this result to \citet{Repullo-1985-RES} who proved an essentially equivalent form for dominant-strategy equilibria.

\begin{proposition}[\citealp{Repullo-1985-RES}]\rm
Let $f:\Theta\rightarrow X$ be an scf; $\delta$ be an ex post equilibrium of the direct revelation mechanism of $f$, $(\Theta,f)$; and $g:=f\circ\delta$ the scf induced by this equilibrium.   Let $(M,\varphi)$ be a mechanism for which there is an ex-post equilibrium $\sigma$ such that for each $\theta\in\Theta$, $\varphi(\sigma(\theta))=f(\theta)$. Then, $\sigma\circ\delta$ is an ex post equilibrium of $(M,\varphi)$ whose outcome scf coincides with $g$.
\label{Prop:Repullo}
\end{proposition}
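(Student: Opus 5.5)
We argue directly from the definition of ex post equilibrium, composing the two given equilibria. Write $\delta=(\delta_i)_{i\in N}$ with $\delta_i:\Theta_i\to\Theta_i$, and define the candidate profile $\tau:=\sigma\circ\delta$ componentwise, i.e., $\tau_i(\theta_i):=\sigma_i(\delta_i(\theta_i))\in M_i$. This is a legitimate strategy profile in $(M,\varphi)$ since each $\tau_i$ depends only on agent $i$'s own type.

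\emph{Step 1: the outcome of $\tau$ is $g$.} For every $\theta\in\Theta$, $\tau(\theta)=\sigma(\delta(\theta))$. Applying the hypothesis $\varphi(\sigma(\theta'))=f(\theta')$ at $\theta'=\delta(\theta)\in\Theta$ gives
\[
\varphi(\tau(\theta))=\varphi(\sigma(\delta(\theta)))=f(\delta(\theta))=g(\theta).
\]

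\emph{Step 2: $\tau$ is an ex post equilibrium.} Fix $\theta\in\Theta$, $i\in N$, and a deviation $m_i\in M_i$. Set $\theta'_{-i}:=\delta_{-i}(\theta_{-i})$. Since $\sigma$ is an ex post equilibrium of $(M,\varphi)$ and payoffs depend only on one's own type, we may evaluate $\sigma$ at the profile $(\theta_i,\theta'_{-i})$. This is the key trick: agent $i$'s \emph{true} type $\theta_i$ is paired with the others' \emph{reported} types. We obtain
\[
\pi_i\bigl(f(\theta_i,\theta'_{-i})\mid\theta_i\bigr)=\pi_i\bigl(\varphi(\sigma_i(\theta_i),\sigma_{-i}(\theta'_{-i}))\mid\theta_i\bigr)\ge \pi_i\bigl(\varphi(m_i,\sigma_{-i}(\theta'_{-i}))\mid\theta_i\bigr).
\]
Next, since $\delta$ is an ex post equilibrium of $(\Theta,f)$, agent $i$ of type $\theta_i$ cannot gain by reporting $\theta_i$ instead of $\delta_i(\theta_i)$ when the others play $\delta_{-i}$:
\[
\pi_i\bigl(f(\delta_i(\theta_i),\theta'_{-i})\mid\theta_i\bigr)\ge \pi_i\bigl(f(\theta_i,\theta'_{-i})\mid\theta_i\bigr).
\]
The left-hand side equals $\pi_i(\varphi(\tau(\theta))\mid\theta_i)$ by Step 1. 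The right-hand side of the second-to-last display equals $\pi_i(\varphi(m_i,\tau_{-i}(\theta_{-i}))\mid\theta_i)$. Chaining the two inequalities yields the ex post condition for $\tau$.

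The only delicate point is Step 2. One must notice that $\sigma$'s equilibrium property is to be invoked at the "mixed" profile $(\theta_i,\delta_{-i}(\theta_{-i}))$ rather than at $\theta$ or at $\delta(\theta)$. This is legitimate precisely because the ex post condition holds at every type profile and preferences are private values. Everything else is bookkeeping.
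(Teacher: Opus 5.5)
Your proof is correct and follows the paper's argument essentially step for step. The paper likewise applies $\delta$'s ex post condition with the deviation to the true report $\theta_i$ against $\delta_{-i}(\theta_{-i})$, rewrites both sides using $\varphi\circ\sigma=f$, applies $\sigma$'s ex post condition at the mixed profile $(\theta_i,\delta_{-i}(\theta_{-i}))$, chains the two inequalities, and concludes with $\varphi\circ\sigma\circ\delta=f\circ\delta=g$.
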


Proposition~\ref{Prop:Repullo} clarifies the limits of robust implementation. Let $f$ denote the principal’s optimal SCF and $g$ the worker-optimal SCF. Since $g$ arises as the outcome SCF of an ex post equilibrium of the direct revelation mechanism for $f$—indeed, a dominant-strategy equilibrium—any mechanism that robustly implements $f$ must also robustly implement $g$.

This implies that whenever the principal provides information-robust incentives for workers to non-cooperatively coordinate on the principal’s preferred outcome, workers also retain the ability to non-cooperatively coordinate on their own preferred outcome. If workers care only about their payoffs, it is plausible that they will coordinate on the Pareto-dominant outcome whenever it is supported by non-cooperative incentives. Consequently, to identify mechanisms that perform well in practice, the principal must appeal to regularities in workers’ behavior beyond payoff maximization.

\subsection{Behavioral robust implementation}\label{Sec-behav-robust}

Aversion to lying is a well-documented behavioral regularity.\footnote{For formal definitions distinguishing lies from deception and for an analysis of how they operate in strategic settings, see \citealp{Sobel-2020-JPE}.} A canonical experimental design asks subjects to roll a die in private and report the outcome, which determines their payoff—for example, \$50 if the outcome is six and zero otherwise \citep[e.g.,][]{Fischbacher-Follmi-JEEA-2008}. Although the experimenter cannot verify individual reports, the true distribution of outcomes is known, allowing one to detect whether a non-negligible fraction of subjects reports truthfully. 
A large meta-analysis of experiments of this type—covering ninety papers with approximately forty thousand subjects across forty-seven countries—shows that subjects forgo about three-quarters of their potential gains from lying \citep{Abeler-2019-ECA}. 

If workers are averse to lying, a principal can exploit this tendency to increase the probability of achieving their optimal outcome. Instead of using abstract message spaces, the principal can ask questions with concrete true values to workers. 

A first approach is obvious. Recall that the principal's SCF is strategy-proof. Thus, a beginner's misreport does not have a direct non-cooperative incentive. A lying-averse worker, whose aversion to lie is enough to overcome their drive to cooperate with other workers, will choose to report truthfully in the principal's direct revelation mechanism.  Thus, if the principal is to use this mechanism, they should do it in its explicit formulation, in which agents are asked directly for their types. 

To avoid ambiguities, we refer to this mechanism as the  \emph{explicit direct mechanism} (\DE ). We refer to a strategically equivalent mechanism in which actions have neutral labels---so no message is a lie for any agent---as an \emph{implicit direct mechanism} (\DI ). Since \DE\ and \DI\ are isomorphic, we can use the same language to refer to the actions and strategies in both games. We refer in \DI\ to the \textit{truthful action} for an agent, as that corresponding to reporting their true type in \DE; and  we refer in \DI\ to the \textit{deceptive action} for an agent, as that corresponding to reporting their opposite type in \DI. By using \DE\ instead of some \DI\ form, the principal can exploit the lying aversion among beginner types to counteract their cooperative behavior.

The principal’s explicit direct revelation mechanism is not the only way to exploit lying aversion. In \DE, misreporting remains a dominant action for beginners. As a result, such misreports may still serve as a focal point for workers who are not averse to lying. The design challenge is therefore to reduce the attractiveness of misreporting for these workers without undermining the appeal of truthful reporting for lying-averse agents.

The principal can address this challenge by enlarging the message space of the direct revelation mechanism and using the additional messages to weaken the incentive to misreport. 

The following mechanism, which we call the \emph{explicit extended mechanism} (\IE), achieves this goal. Each worker is asked to report their expertise level, with the additional option of leaving the question unanswered. The principal pays a high salary to reported experts and a low salary to reported beginners, splits tasks among workers of equal reported expertise, and assigns the delicate task to experts whenever at least one is reported. If exactly one worker leaves the question unanswered, the principal assigns that worker an expertise level opposite to that of the reporting worker. If both workers leave the question unanswered, both are treated as beginners. Table~\ref{Table:DM5} summarizes the resulting mechanism.

\begin{table}[ht]
\centering
\begin{tabular}{c|ccc}
 & \multicolumn{3}{c}{Worker~2 reports} \\
Worker~1 reports & B & E & U \\ \hline
B & $(L,M),(L,M)$ & $(L,P),(H,D)$ & $(L,P),(H,D)$ \\
E & $(H,D),(L,P)$ & $(H,M),(H,M)$ & $(H,D),(L,P)$ \\
U & $(H,D),(L,P)$ & $(L,P),(H,D)$ & $(L,M),(L,M)$ \\
\end{tabular}
\caption{\IE\ mechanism. Each worker reports beginner $(B)$, expert $(E)$, or leaves the question unanswered $(U)$. The matrix shows the allocation conditional on the reports. In each cell, worker~1’s outcome appears on the left and worker~2’s on the right. For example, $(L,M),(L,M)$ assigns $(L,M)$ to both workers.}
\label{Table:DM5}
\end{table}

It is a dominant strategy for each worker to report their true type in \IE. The mechanism retains the non-pecuniary incentives for lying-averse agents: under any common prior, in any equilibrium in which lying-averse agents would lie only if doing so yields a higher payoff, they report truthfully. By Proposition~\ref{Prop:Repullo}, \IE\ also admits an ex post equilibrium that delivers the workers’ optimal outcome (all agents report $E$). However, unlike in both direct mechanisms, this equilibrium is not in dominant strategies. Thus, when play is noisy, workers bear some risk when attempting to coordinate on their optimal outcome.

We refer to a neutral form of \IE\ as the \emph{implicit extended mechanism} (\II): a mechanism strategically equivalent to \IE\ in which actions have neutral labels, so no message is a lie for any agent. Since \IE\ and \II\ are isomorphic, we use the same language for actions and strategies in both games. In \II, the action corresponding to revealing one’s true type in \IE\ is the \emph{truthful action}; the action corresponding to reporting the opposite type is the \emph{deceptive action}; and the remaining action is \emph{unanswered}, mirroring the unanswered option in \IE. 

\section{The experiment}\label{Sec-exp-design}
\subsection{Experimental design}
We implemented the cooperative Bayesian game described in Section~\ref{Section-problem} under a symmetric common prior. We utilized a $2\times 2$ between-subject design, varying whether the mechanism was direct and whether messages were explicit, \{Direct, Extended\}$\times$\{Implicit, Explicit\}. We refer to the four possible mechanisms as \DI, \DE, \II, and \IE, respectively.

Experimental sessions had either nine or twelve subjects. At the start, $2/3$ of subjects were randomly assigned the role of \emph{worker} and $1/3$ the role of \emph{mechanism designer} (the “staffer”). Workers were randomly assigned types (“expert” or “beginner”) with equal probability; types were private information, while the assignment probabilities were common knowledge. Roles were fixed across the experiment, but worker types were re-drawn each period, so a worker subject could be a beginner in some periods and an expert in others. Staffers remained staffers throughout.

Two workers and one staffer were randomly matched each period. The two workers simultaneously chose messages; the mechanism then determined allocations based on the message profile. The staffer had no strategic moves but received payoffs that depended on worker types and messages.

Worker payoffs matched \citeauthor{Repullo-1985-RES}’s original example \cite[pp.~224--225]{Repullo-1985-RES}: for experts,
$\pi_i(H,M\mid E)=4$, $\pi_i(H,D\mid E)=2$, $\pi_i(L,M\mid E)=1$, $\pi_i(L,P\mid E)=0$;
for beginners,
$\pi_i(L,P\mid B)=4$, $\pi_i(H,M\mid B)=4$, $\pi_i(L,M\mid B)=2$, $\pi_i(H,D\mid B)=2$.
These values satisfy \eqref{relexp} and \eqref{relbeg}. The mechanism designer—absent from the original example—received 5 if the mechanism correctly identified both workers’ types, 3 if it identified exactly one, and 1 if it identified neither. The purpose of this third player was to make misrepresentation meaningful and harmful to another party (i.e., not a “white lie”) and hold constant the total payoffs of the three players across equilibria.

The four mechanisms followed the game tables in Section~\ref{sec:RepulloGame}. Specifically, the $2\times 2$ Table~\ref{Table:DM} is the game played between workers in \DE, and the $3\times 3$ Table~\ref{Table:DM5} is the game played in \IE. \DI\ and \II\ used neutral-label versions of \DE\ and \IE, respectively. In the explicit mechanisms, strategies were labeled “Beginner,” “Expert,” and (when applicable) “Decline to State,” and subjects had to type the full statement “I am a beginner,” “I am an expert,” or “I decline to state” to implement their choice. With the implicit mechanisms, the corresponding strategies were labeled “Option~A,” “Option~B,” and (when applicable) “Option~C,” and subjects likewise had to type the full label.

The choice screen displayed three payoff tables—one each for the subject, the other worker, and the staffer—covering all pure-strategy profiles for each of the four worker-type pairings. Subjects used toggle buttons to switch across the four type pairs. A fixed message at the top of the screen reminded each worker of their current type. Staffers saw a similar screen but did not make a choice. Figures~\ref{fig:screen}(a) and \ref{fig:screen}(b) show screenshots for an expert worker under the explicit extended (\IE ) and implicit extended mechanisms (\II ), respectively. Direct mechanisms used identical screens except that the leave unanswered option was absent from the tables and not selectable. The staffer screen mirrored the information displays but never allowed a choice.

\begin{figure}
\includegraphics[width=\textwidth]{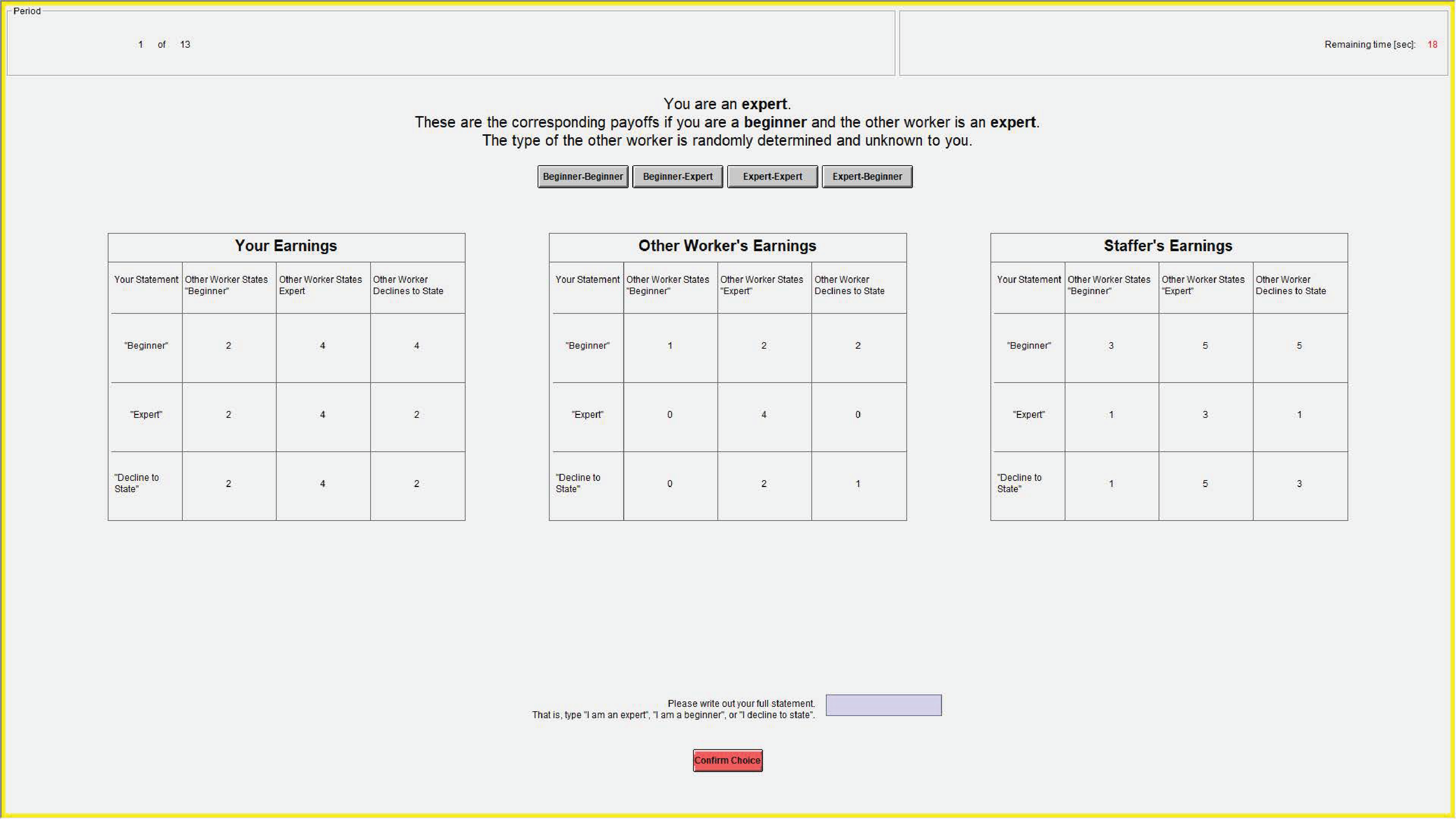}\\

\includegraphics[width=\textwidth]{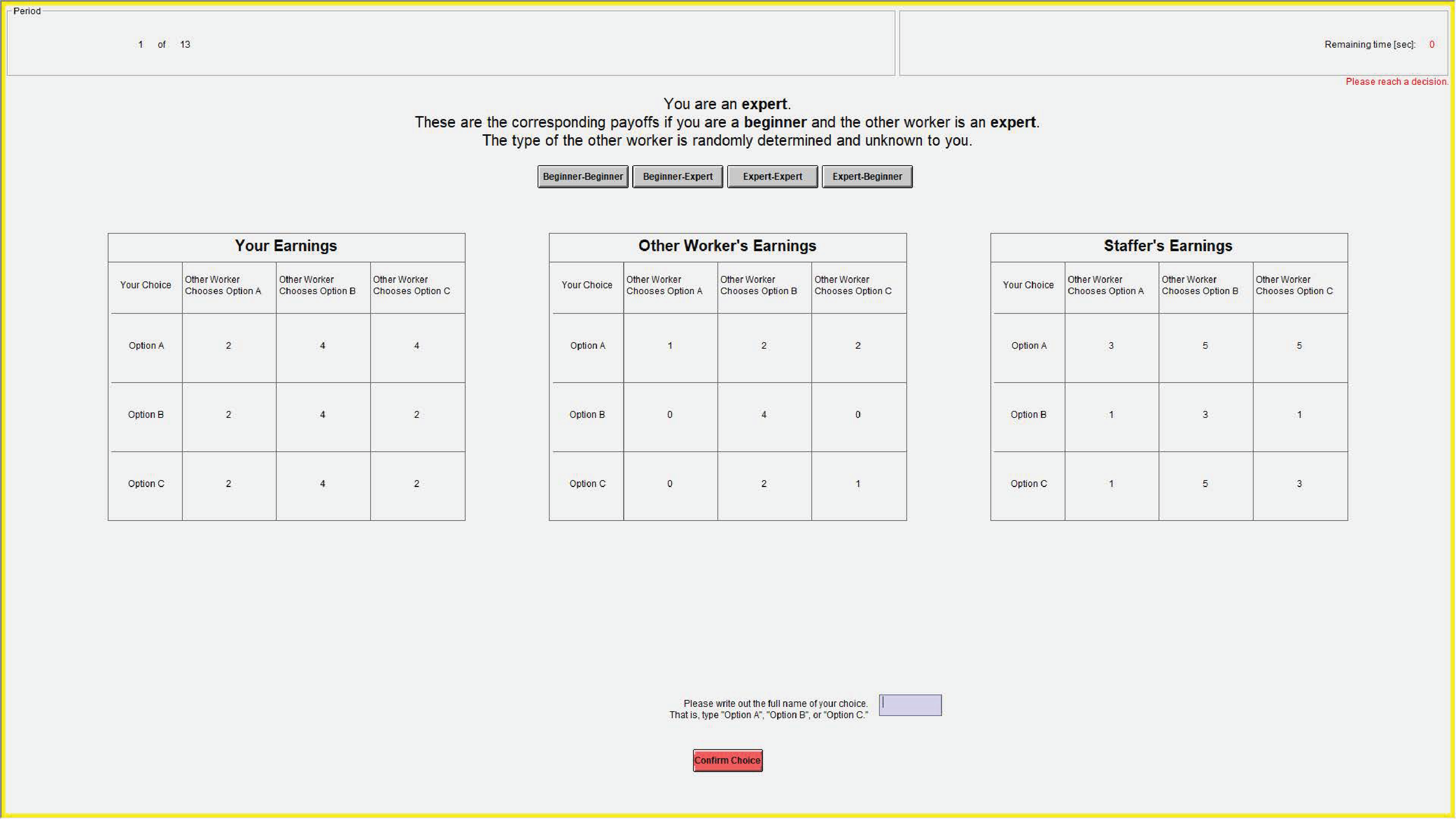}
\caption{\label{fig:screen} Screens for the \IE\ (a, top) and \II\ (b, bottom) mechanisms. The direct mechanisms \DI\ and \DE\ 
used identical screens except there were no rows and columns in the table for the unanswered option, nor option for subjects to select.}
\end{figure}

\subsection{Experimental procedures}

Fourteen sessions with a total of 159 subjects were conducted at the Economic Research Laboratory, Texas A\&M University, during April--May 2021. For \DI\ and \DE, three 12-subject sessions were run for each mechanism. For \II, four sessions were run (three with 12 subjects and one with 9). For \IE, four sessions were run (two with 12 subjects and two with 9).

Instructions were read aloud and subjects were required to compete a comprehension quiz before participating in the experiment (see Supplemental Appendix for both). Each session comprised 13 periods. At the end of each period, subjects observed the actions and payoffs of all group members. The first three periods were unpaid practice. Following \citet{azrieli2018incentives} and \citet{brown2018separated}, one of the last ten periods was randomly selected for payment. Earnings were converted at a rate of $1 \text{ECU} = \$1$, plus a \$10 participation payment. Average earnings were \$13.61 for a 40-minute session.

Subjects were recruited using ORSEE \citep{greiner} and made decisions using z-Tree \citep{fischbacher}. In accordance with university regulations at the time, all subjects wore masks and maintained a social distance of six feet.

\subsection{Hypotheses}

Across all four mechanisms, we compare the incidence of the worker-optimal equilibrium, a meaningful departure from the designer’s intended equilibrium. Although theory specifies strategies ex ante, we observe actions conditional on type. When both workers are experts, both equilibria predict the same action pair (both truthfully declare expert), and each player’s action is uniquely dominant. Under all mechanisms, we therefore expect a high and similar frequency of this action pair when both workers are experts.

We instead focus on cases with at least one beginner: (i) one beginner and one expert, and (ii) two beginners. The worker-optimal and truthful equilibria predict that beginners play the deceptive and truthful actions, respectively. In the \DI\ baseline, both actions are weakly dominant, and the worker-optimal equilibrium improves payoffs for both workers. We thus expect to observe action pairs resembling the worker-optimal equilibrium with some frequency, and we treat this rate as a baseline for cross-mechanism comparisons.

Although \DE\ is strategically identical to \DI, under \DE\ a beginner must enter a statement known to be objectively false to play the deceptive action. Empirical evidence shows subjects are averse to making such lies and act as if they carry an additional (psychic) cost. Therefore, we predict that beginners choose the deceptive action less often under explicit mechanisms, and action pairs associated with the worker-optimal equilibrium occur less frequently relative to \DI.

\begin{hypothesis}\label{hyp:lie}
The worker-optimal equilibrium is observed less often when beginner types must lie explicitly to play it.
\end{hypothesis}

By adding an additional irrelevant action, the $3\times 3$ extended mechanisms, \II\ and \IE, remove weak dominance from the deceptive action for beginners. Further, these game forms leave the truthful action as the unique weakly dominated action for both types. The truthful equilibrium is the only equilibrium that satisfies dominance. We therefore expect more action pairs associated with the truthful action under \II\ and \IE.

\begin{hypothesis}\label{hyp:dom}
The worker-optimal equilibrium is observed less often when it is not in dominant strategies.
\end{hypothesis}

\section{Results}\label{Sec-Results}

Our two focal equilibria, the truthful and the worker-optimal, coincide in expert behavior but differ for beginners. Under the truthful equilibrium, types always report truthfully, so the action pair (expert, expert) should be observed only when both workers are in fact experts. In contrast, under the worker-optimal equilibrium, (expert, expert) should be observed regardless of the workers’ types. Figure~\ref{fig:bars}(a) reports the frequency of (expert, expert) by type-pair across the four mechanisms; Figure~\ref{fig:bars}(b) reports the frequency with which both workers report their true type. Note that for two expert types these outcomes coincide.

With two expert types, we observe high rates (90--100\%) of the (expert, expert) action pair across all mechanisms. For an expert, reporting expert is the uniquely strictly dominant action consistent with both equilibria. By contrast, when at least one worker is a beginner (one beginner--one expert or two beginners), rates differ across mechanisms. In the baseline \DI, (expert, expert) is observed a majority of the time under all type combinations; overall, this worker-optimal action pair occurs in 60.8\% of cases. Meanwhile, both types report truthfully only 28.3\% of the time, almost entirely when both are experts. Under the other three mechanisms—most notably the explicit ones—we observe lower rates of the (expert, expert) action pair and higher rates of truthful reporting when at least one worker is a beginner.

\begin{figure}[t]\begin{center}
\includegraphics[width=0.49\textwidth]{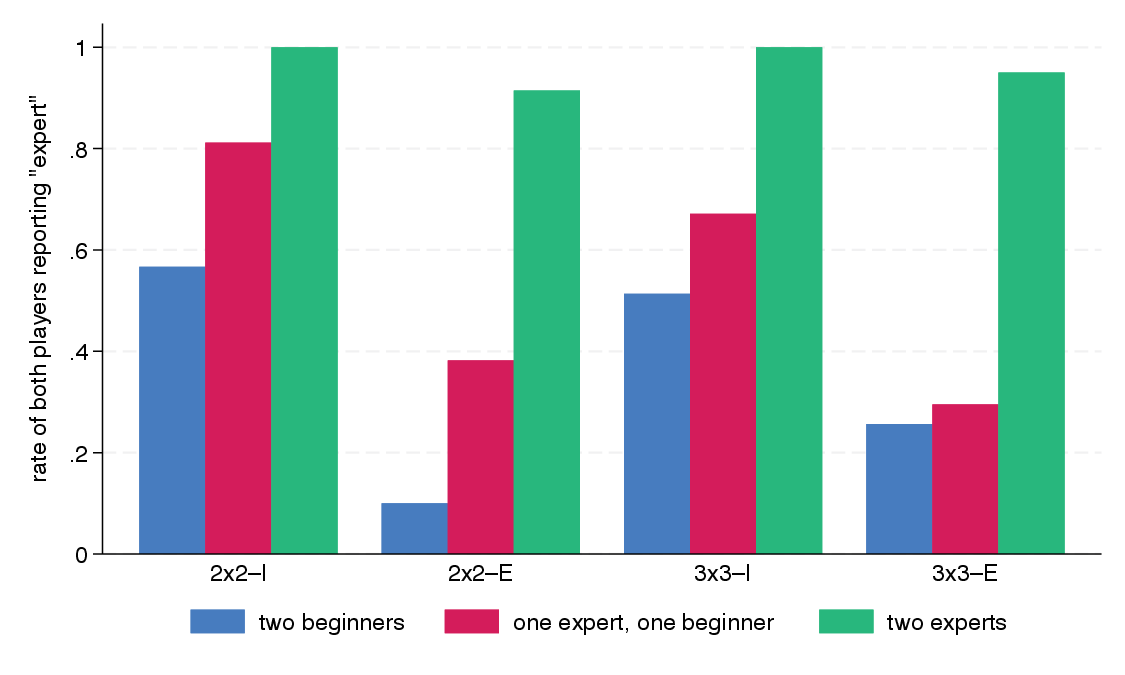}
\includegraphics[width=0.49\textwidth]{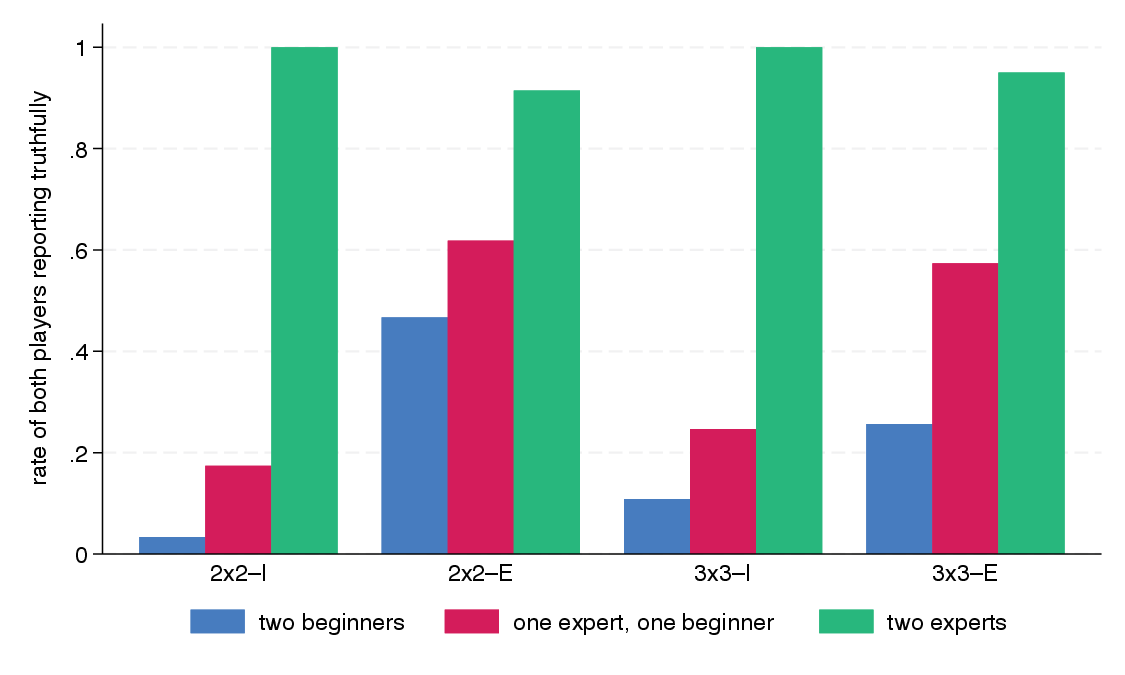}\\
\includegraphics[width=0.49\textwidth]{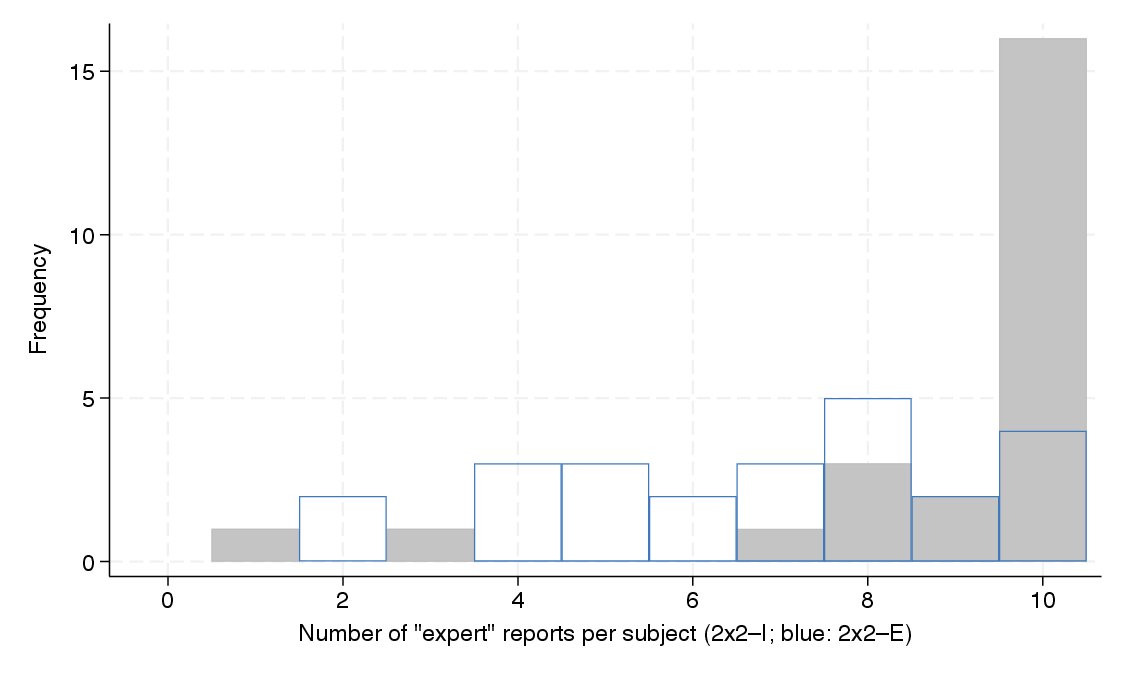}
\includegraphics[width=0.49\textwidth]{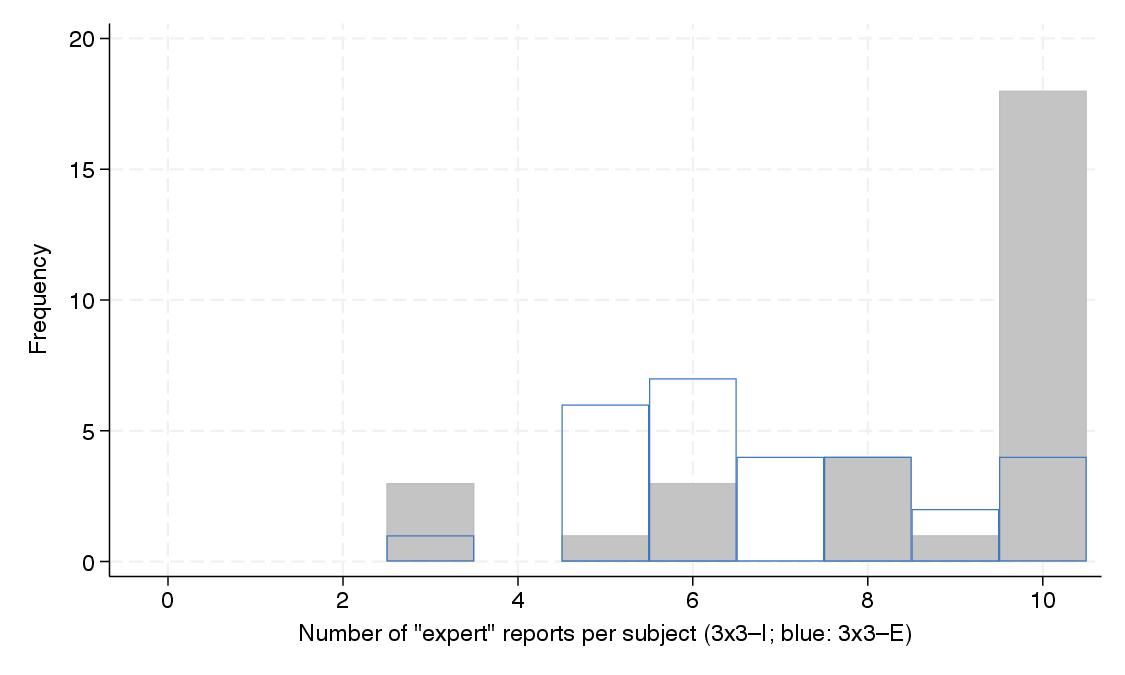}
\caption{(a, upper left) Rates of both workers reporting export by mechanism and player type-pair. (b, upper right) Rates of both players reporting type truthfully by mechanism and player type-pair. (c, bottom left) Histogram of the number of times (out of 10) players under the two direct mechanisms (\DE\ and \DI ) report being an expert type ($N=24$ and $30$, respectively). (d, bottom right) Histogram of the number of times (out of 10) players in the extended mechanisms (\IE\ and \II ) report being an expert type ($N=24$ and $30$, respectively).\label{fig:bars}}
\end{center}
\end{figure}

\begin{figure}[p]\begin{center}
\includegraphics[width=0.625\textwidth]{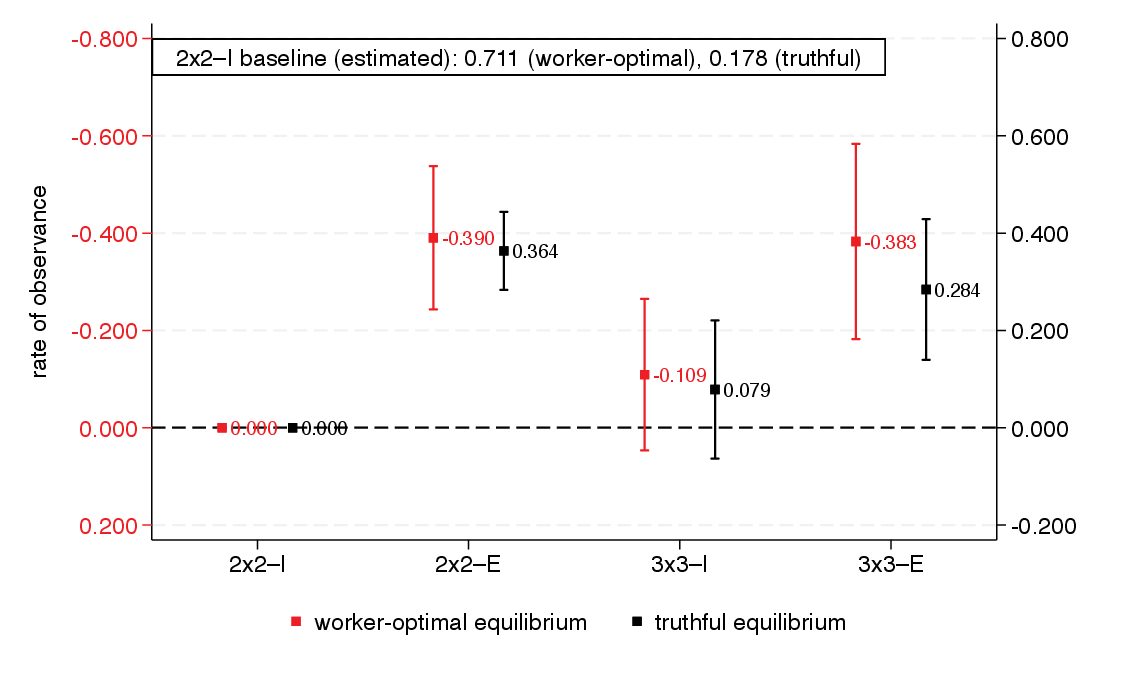}
\includegraphics[width=0.625\textwidth]{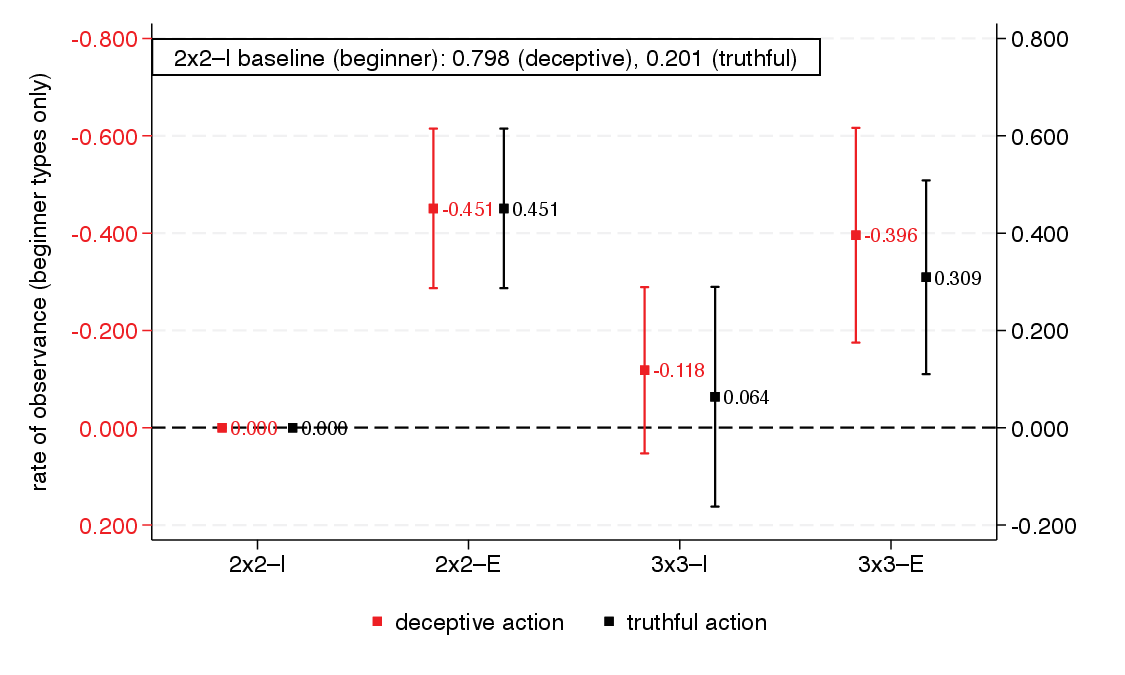}
\includegraphics[width=0.625\textwidth]{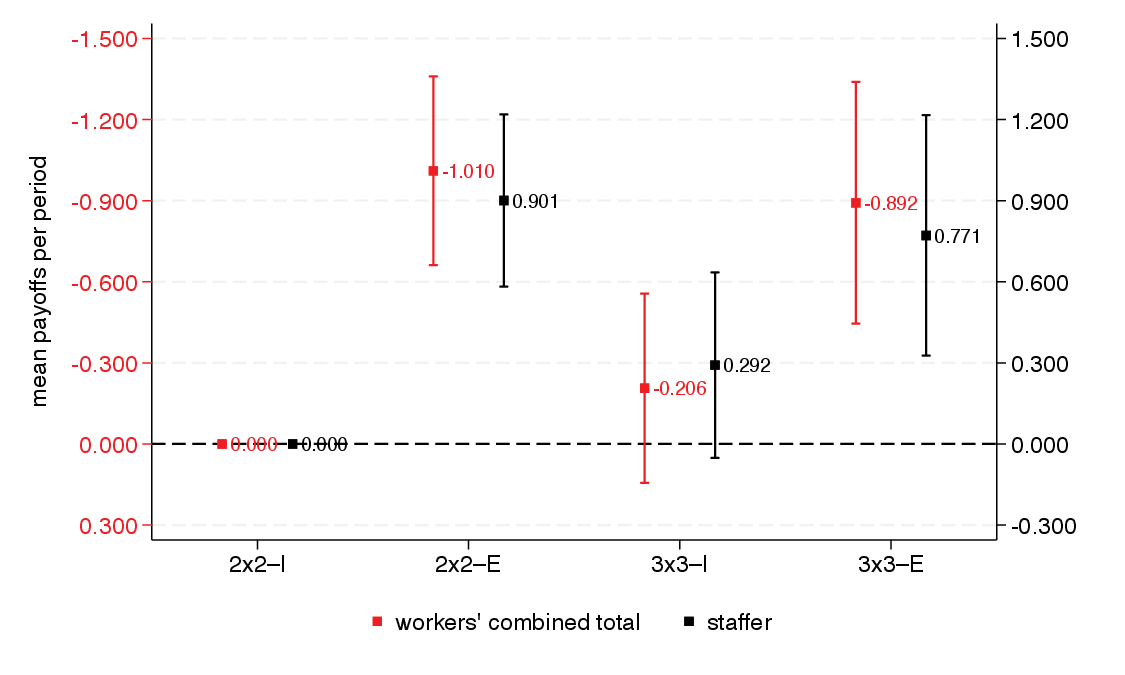}
\caption{Each figure reports the estimated coefficients of two regression models. All are relative to the baseline \DI\ mechanism. The plots in black (right bracket intervals whose axis is labeled on the right) show positive effects; the plots in red (left intervals whose axis is labeled on the left in reversed values) show negative effects. (a, top) Effect of each mechanism on observance of truthful equilibrium (in black) and worker-optimal equilibrium (in red). (b, top) Effect of each mechanism on play of truthful action (in black) and deceptive action (in red) when subjects are beginner type. (c, bottom) Effect of each mechanism on mean worker combined (in red) and staffer (in black) payoffs per period. All regressions utilize cluster robust standard errors at the session level. The regressions in the top and bottom panels include dummy variables for worker-type pair (i.e., beginner-beginner, beginner-expert, expert-expert). Results in tabular form in Supplemental Appendix Table \ref{tab:regs}.\label{fig:plusminus}}
\end{center}
\end{figure}

We estimate how each mechanism affects the incidence of each equilibrium using two linear probability models (LPMs). The dependent variable is an indicator for the worker-optimal equilibrium (Model~1) or the truthful equilibrium (Model~2). Regressors are dummies for mechanisms, with \DI\ as the baseline. Because when both workers are experts the observed action pair (expert, expert) is consistent with both equilibria, we treat these cases as censored. We include dummies for the three type-pair permutations and cluster robust standard errors at the session level. Figure~\ref{fig:plusminus}(a) reports estimates: red bars for the worker-optimal equilibrium and black bars for the truthful equilibrium.

Relative to \DI, both explicit mechanisms reduce the estimated likelihood of the worker-optimal equilibrium by 38--39 percentage points (pp, $p<0.001$). Simultaneously, \DE\ increases the truthful equilibrium by 36~pp ($p<0.001$) and \IE\ increases by 28~pp ($p<0.001$).
For \II, which leaves the worker-optimal equilibrium only in dominated strategies, the effects are smaller: $-11$~pp (worker-optimal) and $+8$~pp (truthful), neither statistically different from zero ($p\approx0.170$ and $p\approx0.278$, respectively).

We also examine subjects’ choice of strategy at the individual level. Each worker faced ten equiprobable draws of type and made ten reports. Figures~\ref{fig:bars}(c)--(d) plot the total number of “expert” reports, by subject, for the direct and extended mechanisms, respectively. Under truthful play, the distribution would be binomial with mean~5; under play consistent with the worker-optimal equilibrium, all values would be 10. All mechanisms show a right shift, indicating some strategic misreporting. Pronounced spikes at~10 appear under \DI\ and \II: 16/24 (67\%) and 17/30 (57\%) subjects, respectively, play in line with the worker-optimal equilibrium. Under \DE\ and \IE, the spikes are much smaller (4 subjects each: 17\% and 14\%).

Figure~\ref{fig:plusminus}(b) reports LPM results for (i) playing the deceptive action and (ii) playing the truthful action, restricting to observations in which the subject is a beginner. For the extended mechanisms, choosing the unanswered option is coded as neither. As predicted, beginners are substantially less likely to choose the deceptive action when doing so requires an explicit lie: \DE\ reduces “expert” claims by 45~pp (and, by construction, identically raises truthful reports by 45~pp, $p<0.001$), and \IE\ reduces deceptive play by 40~pp ($p<0.001$). Under \II, the reduction is 12~pp and not statistically significant $(p\approx0.159)$. With the two extended mechanisms, the increase in truthful play is marginally smaller than the decrease in lying, consistent with some substitution into the unanswered option.

Mechanism differences meaningfully affect designer payoffs. Recall the staffer earns 5 if both types are correctly identified, 3 if exactly one is, and 1 otherwise. Under \DI, the staffer’s mean payoff is \$3.27 and workers’ total is \$7.52; the combined \$10.78 is near the efficient maximum of \$11. Under \DE, higher truthful play raises staffer payoffs by \$0.90 and reduces worker totals by \$1.01 ($p<0.001$, both estimates). Under \IE, the staffer’s payoff rises by \$0.77 and workers’ falls by \$0.89 ($p<0.001$, both estimates). The \II\ increases staffer payoffs by 0.29, a result that is marginally significant ($p\approx0.089$). The decrease to worker payoffs is smaller, 0.20, and not statistically meaningful ($p\approx0.201$).

Taken together, the results strongly support Hypothesis~\ref{hyp:lie}. When equilibrium play requires subjects to lie explicitly, the incidence of the worker-optimal equilibrium falls substantially, and behavior shifts toward the truthful equilibrium. By contrast, we find less support for Hypothesis~\ref{hyp:dom}: making the worker-optimal equilibrium consist only of weakly dominated strategies yields only a modest reduction in its incidence. Crucially, across all six regressions, the effects of mechanisms that leverage lying aversion are significantly larger than those that rely on unique weak dominance. Specifically the magnitude of the coefficients for the \DE\ and \IE\ are larger than \II\ ($p<0.05$) in all 12 binary comparisons.

We caution that the small effects of the \II\ mechanism should not necessarily be attributed to irrationality. For one, it is perfectly rational to play a weakly dominated action as a best response in equilibrium. Further, in the extended mechanisms, the deceptive action is not especially costly for beginners: payoffs diverge from the truthful action only when the other worker plays ``unanswered,'' an off-equilibrium action. Such choice occurs with frequency $2.7\%$ and $4.6\%$ in the \II\ and \IE\ mechanisms, respectively. In \IE, this implies an expected loss relative to the dominant action of $(4-2)\times 0.027 = 0.054$, about $1.5\%$ of the \$3.69 average earnings. It is slightly larger in \IE, \$0.092 (9.2 cents), or $2.7\%$ of the \$3.35 average earnings. Neither magnitude approaches the gains from worker cooperation.

\section{Discussion and concluding remarks}\label{Sec-Discussion}

This paper evaluates, empirically, whether lying aversion can be used to improve mechanism performance in environments where agents have incentives to coordinate on outcomes that the designer does not intend. Theoretically, we observe that robust implementation alone cannot preclude worker-optimal equilibria: any mechanism that robustly implements the principal’s target also robustly implements the worker-optimal outcome (Proposition~\ref{Prop:Repullo}). Experimentally, we find that making a deceptive action an \emph{explicit lie} is a powerful and practical tool for shifting play away from such outcomes and toward truth-telling.

Three conclusions emerge. First, across mechanisms, explicit language that turns deceptive actions into a lie substantially reduces play of the worker-optimal equilibrium and increases truthful outcomes; the estimated effects are large and statistically robust. Second, relying only on dominance (via an added ``unanswered'' option) yields smaller and statistically weaker effects. In our data, lying aversion is a stronger deterrent than weak dominance when payoffs are similar across actions. Third, these behavioral shifts translate into economically meaningful improvements in designer payoffs without large efficiency losses.



Taken together, our results suggest that lying aversion is not merely an ancillary psychological detail but a practically useful force for mechanism designers. In environments where robust implementation leaves room for undesirable equilibria, carefully crafted message frames that make false claims costly in non-pecuniary domains can help align behavior with social objectives.

\section*{Appendix}

\begin{proof}[Proof of Proposition~\ref{Prop:Repullo}]
Since $\delta$ is an ex-post equilibrium of $(\Theta,f)$, we have that for each $\theta\in\Theta$ and each $i\in N$, $\pi_i(f(\delta(\theta))|\theta_i)\geq \pi_i(f(\theta_i,\delta_{-i}(\theta_{-i}'))|\theta_i)$. Since $\varphi\circ \sigma=f$,
\[\pi_i(\varphi_i(\sigma_i(\delta_i(\theta_i)),\sigma_{-i}(\delta_{-i}(\theta_{-i})))|\theta_i)\geq \pi_i(\varphi_i(\sigma_i(\theta_i),\sigma_{-i}(\delta_{-i}(\theta_{-i})))|\theta_i).\]
Since $\sigma$ is an ex-post equilibrium of $(M,\varphi)$, for each $m_i\in M_i$,
\[ \pi_i(\varphi_i(\sigma_i(\theta_i),\sigma_{-i}(\delta_{-i}(\theta_{-i})))|\theta_i)\geq \pi_i(\varphi_i(m_i,\sigma_{-i}(\delta_{-i}(\theta_{-i})))|\theta_i).\]
Then, for each for each $\theta\in\Theta$, each $i\in N$, and each $m_i\in M_i$,
\[\pi_i(\varphi_i(\sigma\circ \delta(\theta))|\theta_i)\geq \pi_i(\varphi_i(m_i,\sigma_{-i}\circ \delta_{-i}(\theta_{-i}))|\theta_i).\]
Thus, $\sigma\circ\delta$ is an ex-post equilibrium of $(M,\varphi)$ whose outcome scf is $\varphi\circ \sigma\circ\delta=f\circ \delta=g$.
\end{proof}

\bibliography{ref-TMD}
\pagebreak
\appendix

\renewcommand{\thetable}{A.\arabic{table}}
\renewcommand\thefigure{A.\arabic{figure}}    
\renewcommand\theequation{A.\arabic{equation}}    
\setcounter{equation}{0}
\setcounter{table}{0}
\setcounter{figure}{0}

\centerline{\large \textbf Supplemental Appendix: Not Intended for Publication}
\section{Additional Tables}

\begin{table}[ht]
\resizebox{\textwidth}{!}{
\begin{tabular}{lcccccc}
\toprule
\midrule
 & (1) & (2) & (3) & (4) & (5) & (6) \\
VARIABLES & \begin{tabular}{c}worker-optimal\\
 equilibrium\\
 observed \end{tabular} & \begin{tabular}{c}truthful\\
 equilibrium\\
 observed \end{tabular}& \begin{tabular}{c} deceptive\\ action\\ played \end{tabular} & \begin{tabular}{c} truthful\\ action\\ played \end{tabular} & \begin{tabular}{c}staffer\\profit\end{tabular} & \begin{tabular}{c} workers'\\combined\\profit\end{tabular}\\ 
 &  &  &  &  &  &  \\\midrule
\DE\  & -0.390*** & 0.364*** & -0.451*** & 0.451*** & 0.901*** & -1.010*** \\
mechanism & (0.075) & (0.041) & (0.076) & (0.076) & (0.147) & (0.162) \\
\II\  & -0.109 & 0.079 & -0.118 & 0.064 & 0.292* & -0.206 \\
mechanism & (0.079) & (0.072) & (0.079) & (0.105) & (0.159) & (0.162) \\
\IE\  & -0.383*** & 0.284*** & -0.396*** & 0.309*** & 0.771*** & -0.892*** \\
mechanism & (0.102) & (0.074) & (0.102) & (0.092) & (0.206) & (0.207) \\
1 expert, & 0.177*** & 0.187*** &  &  & 1.181*** & 0.729*** \\
1 beginner & (0.056) & (0.055) &  &  & (0.177) & (0.185) \\
2 experts & 0.133* & 0.236*** &  &  & 2.175*** & 1.534*** \\
 & (0.072) & (0.066) &  &  & (0.266) & (0.242) \\
Constant & 0.586*** & 0.030 & 0.798*** & 0.202*** & 2.207*** & 6.829*** \\
 & (0.093) & (0.045) & (0.063) & (0.063) & (0.218) & (0.225) \\
 & (0.032) & (0.043) & (0.063) & (0.063) & (0.218) & (0.225) \\
 &  &  &  &  &  &  \\
Observations & 530 & 530 & 530 & 530 & 530 & 530 \\
R-squared &  &  & 0.139 & 0.132 & 0.434 & 0.259 \\
log likelihood & -290.0 & -267.1 & -340.7 & -335.9 & -748.0 & -825.2 \\ 
\midrule
\bottomrule
\multicolumn{7}{c}{ *** p$<$0.01, ** p$<$0.05, * p$<$0.1} \\
\end{tabular}}
\caption{Regressions found in Figure \ref{fig:plusminus}. Columns (1) and (2) show the regressions used in panel (a). Columns (3) and (4) show the regressions used in panel (b). Columns (5) and (6) show the regressions used in panel (c). All regressions
utilize cluster-robust standard errors at the session level.\label{tab:regs}}
\end{table}

\begin{table}[H]\footnotesize
\centering
\begin{threeparttable}
\begin{tabular}{lccc}
\toprule
\midrule
&\multicolumn{2}{c}{player type}&\\
\cmidrule{2-3}
mechanism&beginner only&expert only&all types\\
\midrule
\DI, 3 sessions &0.202&0.991&0.567\\
\DE, 3 sessions &0.652&0.976&0.821\\
\II, 4 sessions &0.265 (0.054)&1.000 (0.000) &0.640 (0.027)\\
\IE, 4 sessions &0.511 (0.086)&0.972 (0.007)&0.743 (0.046)\\
\midrule
$H_0:$ all mechanisms are equal\tnote{1}& 0.020 (0.001) & 0.073 (0.204) & 0.042 (0.029)\\
$H_0:$ direct and extended are equal\tnote{2} & 0.779 (0.976)& 0.965 (0.610)& 0.976 (0.662)\\
$H_0:$ implicit and explicit are equal\tnote{2}&0.001 (0.001) & 0.017 (0.070) & 0.006 (0.004)\\
\midrule
\bottomrule
\end{tabular}
\begin{tablenotes}
\item [1] Kruskal-Wallis p-value on session-level averages ($N=14$). Parentheses indicate results of test when third neutral actions are not considered in averages.
\item [2] Mann-Whitney p-value on session-level averages ($N=14$). Parentheses indicate results of test when third neutral actions are removed are not considered in averages.
\end{tablenotes}
\caption{Rates of play of truthful action and leaving unanswered (in parenthesis, where applicable) by mechanism and player type. Rates of deceptive action can be inferred by subtracting from 1.\label{tab:domrate}}
\end{threeparttable}
\end{table}
\end{document}